\newtheorem{theorem}{Theorem}
\newtheorem{lemma}{Lemma}[section]
\newtheorem{corollary}[lemma]{Corollary}
\newtheorem{observation}[lemma]{Observation}
\newtheorem*{rep@theorem}{\rep@title}
\newcommand{\newreptheorem}[2]{%
\newenvironment{rep#1}[1]{%
 \def\rep@title{#2 \ref{##1}}%
 \begin{rep@theorem}}%
 {\end{rep@theorem}}}
\newcommand{\defcal}[1]{\expandafter\newcommand\csname c#1\endcsname{{\mathcal{#1}}}}
\newcommand{\defbb}[1]{\expandafter\newcommand\csname b#1\endcsname{{\mathbb{#1}}}}
\newcounter{calBbCounter}
    \edef\letter{\Alph{calBbCounter}}
\newcommand{\eps}{\varepsilon}
\newcommand{\ie}{{\it i.e.}}
\newcommand{\eg}{{\it e.g.}}
\newcommand{\nnR}{{\bR_{\geq 0}}}
\newcommand{\characteristic}{{\mathbf{1}}}
\newcommand{\RSet}{{\mathtt{R}}}
\newcommand{\partsol}[2][]{{\ifthenelse{\equal{#1}{}}{y^{(#2)}}{y^{(#2)}_{#1}}}}
\title{Guess Free Maximization of Submodular and Linear Sums}
\author{Moran Feldman\thanks{Department of Mathematics and Computer Science, The Open University of Israel. E-mail: moranfe@openu.ac.il}}
\begin{document}

\maketitle
\begin{abstract}
We consider the problem of maximizing the sum of a monotone submodular function and a linear function subject to a general solvable polytope constraint. Recently, Sviridenko et al.~\cite{SVW17} described an algorithm for this problem whose approximation guarantee is optimal in some intuitive and formal senses. Unfortunately, this algorithm involves a guessing step which makes it less clean and significantly affects its time complexity. In this work we describe a clean alternative algorithm that uses a novel weighting technique in order to avoid the problematic guessing step while keeping the same approximation guarantee as the algorithm of~\cite{SVW17}.

\medskip

\noindent \textbf{Keywords:} submodular maximization, continuous greedy, curvature
\end{abstract}
\pagenumbering{Alph}
\thispagestyle{empty}
\clearpage
\pagenumbering{arabic}

\pagenumbering{arabic}

\section{Introduction}

The last decade has seen a surge of work on submodular maximization problems. Arguably, the main factor that allowed this surge was the invention of the multilinear relaxation for submodular maximization problems as well as algorithms for (approximately) solving this relaxation~\cite{BF16,CCPV11,CVZ14,EN16,FNS11}. The invention of the multilinear relaxation was so influential because it allowed algorithms for submodular maximization to use the technique of first solving a relaxed version of the problem, and then rounding the fractional solution obtained. This technique is well-known, and it is often used in the design of algorithms for other kinds of problems; but, prior to the invention of the multilinear relaxation, it was not known how to apply it to submodular maximization problems.

An algorithm based on the above mentioned technique usually has two main components: a solver that (approximately) solves the relaxation and a rounding procedure. Historically, the first solver described for multilinear relaxations was the Continuous Greedy algorithm that solves such relaxations up to an approximation ratio of $1 - \nicefrac{1}{e}$ when the objective function is non-negative and monotone (in addition to being submodular)~\cite{CCPV11}.\footnote{A set function $f\colon 2^\cN \to \bR$ is \emph{monotone} if $f(S) \leq f(T)$ for every two sets $S \subseteq T \subseteq \cN$ and \emph{submodular} if $f(S \cup \{u\}) - f(S) \geq f(T \cup \{u\}) - f(T)$ for every two such sets and element $u \in \cN \setminus T$.} While the invention of continuous greedy was very significant, one can note that unlike standard solvers for more familiar relaxations such as LPs and SDPs, the approximation ratio of continuous greedy is quite far from $1$. Unfortunately, a hardness result due to~\cite{NW78} implies that its approximation ratio cannot be improved in general.

This situation motivates the question of how well can one approximate multilinear relaxations whose objective includes both monotone submodular and linear components. Specifically, it is interesting to know whether the approximation ratio that can be achieved in such cases improves gradually as the linear component of the objective becomes more prominent. Recently, Sviridenko et al.~\cite{SVW17} answered this question in the affirmative. More formally, they considered the following problem. Given a non-negative monotone submodular function $g\colon 2^\cN \to \nnR$, a linear function $\ell$ and a solvable polytope $P \subseteq [0, 1]^\cN$,\footnote{A polytope $P$ is solvable if one can optimize linear functions subject to it.} find a point $x \in P$ that approximately maximizes $G(x) + \ell(x)$, where $G$ is the multilinear extension of $G$ (see Section~\ref{sec:preliminaries} for a definition). Sviridenko et al.~\cite{SVW17} described a variant of continuous greedy that, given an instance of this problem, outputs a vector $x$ obeying the inequality $G(x) + \ell(x) \geq (1 - e^{-1}) \cdot g(OPT) + \ell(OPT)$ up to a small error term, where $OPT$ is an optimal integral solution for the problem.\footnote{Technically, Sviridenko et al.~\cite{SVW17} considered only the special case of the problem in which $P$ is a matroid polytope, and designed two algorithms for this case. However, one of these algorithms (the continuous greedy based one) trivially extends to arbitrary solvable polytopes.}

Intuitively, the result of Sviridenko et al.~\cite{SVW17} is tight since it approximates the submodular component of the objective up an approximation ratio of $1 - \nicefrac{1}{e}$ and the linear component up to a ratio of $1$. More formally, Sviridenko et al.~\cite{SVW17} showed that their result is tight since it yields optimal approximation ratios for two problems of interest: maximizing a non-negative monotone submodular function with a bounded curvature subject to a matroid constraint, and minimizing a non-negative non-increasing supermodular function with a bounded curvature subject to the same kind of constraint.

\subsection{Our Result}
Despite being optimal in terms of its approximation guarantee, in the senses described above, the algorithm of Sviridenko et al.~\cite{SVW17} suffers from a significant drawback. Namely, it is based on guessing the contribution of the linear component of the objective to the optimal solution, and this guessing step is quite problematic for the following reasons.
\begin{itemize}
	\item The guessing is done by enumerating $\Theta(n\eps^{-1} \log n)$ different possible values, and thus, increases the time complexity of the algorithm by this factor. Moreover, to guarantee that one of the enumerated values is a good enough guess, the set of values tried is constructed in a non-trivial way, which is then reflected in the complexity of the algorithm's analysis.
	\item The original continuous greedy algorithm repeatedly maximizes linear functions subject to the constraint polytope $P$. While this computational step is quite slow in general, for many cases of interest (for example, when $P$ is a matroid polytope) it can be implemented very efficiently. In contrast, the algorithm of~\cite{SVW17} maximizes linear functions subject to the intersection of $P$ with a polytope defined by the guessed value, which might be a very slow operation even when optimizing linear functions subject to $P$ itself is fast. Moreover, various techniques have been described to speed up continuous greedy when it is applied to a matroid polytope~\cite{BV14,BFS17}, and these techniques fail to apply to the algorithm of~\cite{SVW17} because it considers the intersection of $P$ with another polytope rather than $P$ itself.
\end{itemize}

In this work we present a clean alternative algorithm that has the same approximation guarantee as the algorithm of Sviridenko et al.~\cite{SVW17}, but avoids the guessing step and all the problems resulting from it. In a nut shell, our algorithm is a modification of continuous greedy in which the weight assigned to each component of the objective function varies over time. At the beginning of an execution of the algorithm, the linear component has much more weight than the monotone submodular component, and over time their weights become equal. Intuitively, this kind of weighting makes sense because the standard analysis of continuous greedy for submodular functions uses a lower bound on the gain of the algorithm in its later steps which decreases if the algorithm has already made a significant gain in earlier steps. Thus, any gain from the submodular component of the objective that is obtained early in the algorithm's execution is partially cancelled by the resulting decrease in the gain guaranteed in later steps of the execution. In contrast, gain obtained from the linear component of the objective in the same early steps of the execution does not suffer from such partial cancellation, and thus, should get more weight.

\subsection{Additional Related Work}

When the linear function is non-negative, its sum with the monotone submodular function is still monotone and submodular. Thus, in this case the work of Sviridenko et al.~\cite{SVW17} can be viewed as improving the gurantee of continuous greedy in a special case. More recently, Soma and Yoshida~\cite{SY17} used an algorithm based on a similar technique to improve over the guarantee of continuous greedy in the more general case in which the monotone submodular objective can be decomposed into a monotone submodular component and a significant $M^\natural$-concave component. In an earlier work, Feldman et al.~\cite{FNS11} took the complementing approach of using properties of the constraint polytope, rather than the objective function, to improve over the guarantee of continuous greedy. Specifically, they described a variant of continuous greedy, named Measured Continuous Greedy, which achieves an improved approximation ratio when the constraint is dense (in some sense).

As mentioned above, an algorithm that works by solving a relaxation and then rounding the solution has two main components: a relaxation solver and a rounding procedure. All the discussion up to this point was devoted to relaxation solvers because the current work is about such a solver and also because, unlike solvers, rounding procedures tend to be very problem specific. Nevertheless, there are a few more noticeable such procedures. A large portion of the work done so far on submodular maximization has been in the context of matroid constraints, for which there are two known rounding procedures that do not lose anything in the objective: Pipage Rounding~\cite{CCPV11} and Swap Rounding~\cite{CVZ10}. In another line of work, Chekuri et al.~\cite{CVZ14} designed a framework called ``contention resolution schemes'' which yields a rounding procedure for every constraint that can be presented as the intersection of few simple constraints. Later works extended the contention resolution schemes framework into online and stochastic settings~\cite{AW18,FSZ16,GN13}.
\section{Preliminaries} \label{sec:preliminaries}

In this section we describe the notation that we use and give a few relevant definitions. Using these definitions we then formally describe the guarantee of the algorithm we analyze.

Given a set $S$ and an element $u$, we use $S + u$ and $S - u$ as shorthands for the union $S \cup \{u\}$ and the expression $S \setminus \{u\}$, respectively. If we are also given a set function $f$, then the marginal contribution of $u$ to $S$ with respect to $f$ is denoted by $f(u \mid S) \triangleq f(S + u) - f(S)$. Notice that using this notation we get that a function $f\colon 2^\cN \to \bR$ is submodular if and only if for every two sets $S \subseteq T \subseteq \cN$ and element $u \in \cN \setminus T$ it holds that $f(u \mid S) \geq f(u \mid T)$. Occasionally, we are also interested in the marginal contribution of a set $T$ to a set $S$ with respect to $f$, which we denote by $f(T \mid S) \triangleq f(S \cup T) - f(S)$.

The \emph{multilinear} extension of a set function $f\colon 2^\cN \to \bR$ is a function $F: [0, 1]^\cN \to \bR$ whose value for a vector $x \in [0, 1]^\cN$ is defined as
\[
	F(x)
	=
	\bE[f(\RSet(x))]
	=
	\sum_{S \subseteq \cN} \mleft(\prod_{u \in S} x_u \mright) \cdot \mleft(\prod_{u \in \cN \setminus S} (1 - x_u) \mright) \cdot f(S)
	\enspace,
\]
where $\RSet(x)$ is a random set containing every element $u \in \cN$ with probability $x_u$, independently. One can observe that $F$ is an extension of $f$ in the sense that for every set $S \subseteq \cN$, if we denote by $\characteristic_S$ the characteristic vector of $S$, then it holds that $f(S) = F(\characteristic_S)$. Additionally, observe that the rightmost hand side of $F$'s definition implies that $F$ is indeed a multilinear function, as suggested by its name. The multilinearity of $F$ implies that for every vector $y \in [0, 1]^\cN$ and element $u \in \cN$ the partial derivative of $F$ with respect to $y$ is given by
\[
	\left.\frac{\partial F(x)}{\partial x_u}\right|_{x = y}
	=
	F(y \vee \characteristic_{\{u\}}) - F(y \wedge \characteristic_{\cN - u})
	=
	\bE[f(u \mid \RSet(y) - u)]
	\enspace,
\]
where the vector operations $\vee$ and $\wedge$ represent coordinate-wise maximum and minimum, respectively. 

A linear function is defined by a vector $\ell \in \bR^\cN$. We abuse notation and identify the vector $\ell$ with the linear function it defines. Accordingly, we denote the value of the function for a vector $x \in [0, 1]^\cN$ as $\ell(x) \triangleq \ell \cdot x$. In a further abuse of notation, given a set $S \subseteq \cN$, we use $\ell(S)$ as a shorthand for $\ell(\characteristic_S) = \sum_{u \in S} \ell_u$.

An instance of the problem we consider in this work consists of a non-negative monotone submodular function $g\colon 2^\cN \to \nnR$, a linear function $\ell\colon 2^\cN \to \bR$ and a solavable polytope $P \subseteq [0, 1]^\cN$. We make the standard assumption that the submodular function $g$, whose description might be exponential in terms of the size $n$ of $\cN$, is accessible to the algorithm through a value oracle that given a set $S \subseteq \cN$ returns $f(S)$. The objective  of the problem is to find a vector $x \in P$ maximizing $G(x) + \ell(x)$, where $G$ is the multilinear extension of $g$. The result that we prove for this problem is given by the next theorem. Let $OPT$ be the set corresponding to an optimal integral solution for the problem, \ie, $OPT = \arg\max_{S \subseteq 2^\cN, \characteristic_S \in P} \{g(S) + \ell(S)\}$, and let $m = \max_{u \in \cN} \{g(u \mid \varnothing)\}$.

\begin{theorem} \label{thm:main_result} 
There exists a polynomial time algorithm for the above problem that given a value $\eps \in (0, 1)$ outputs a vector $x \in P$ such that with high probability $G(x) + \ell(x) \geq (1 - e^{-1}) \cdot g(OPT) + \ell(OPT) - O(\eps) \cdot m$.
\end{theorem}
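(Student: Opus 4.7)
The plan is to replace the guessing step of~\cite{SVW17} by running a modified continuous greedy that, at each (discretized) time $t\in[0,1]$, picks a direction $v(t)\in P$ maximizing $w(t)\nabla G(y(t))\cdot v+\ell\cdot v$ for an appropriately chosen time-varying weight $w(t)$, starting from $y(0)=\mathbf{0}$ and taking steps of length $\eps$. This implements the intuition described in the introduction: early on the submodular component is given small weight so that $\ell$ dominates the choice of $v(t)$, and only at $t=1$ do the two components carry the same weight; because early linear gains are not cancelled by reductions in future submodular potential, they should be prioritized when such cancellation would hurt the most. Crucially, the linear optimization at every step is over $P$ itself rather than over an intersection of $P$ with an auxiliary polytope, so the procedure can still exploit any fast linear oracle for $P$.

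The correct weight falls out of a clean continuous-time analysis. By optimality of $v(t)$ for the weighted objective tested at $\characteristic_{OPT}\in P$, together with the standard monotone-submodular inequality $\nabla G(y)\cdot\characteristic_{OPT}\geq g(OPT)-G(y)$, one obtains the differential inequality
\begin{equation*}
    w(t)\cdot\tfrac{d}{dt}G(y(t))+\tfrac{d}{dt}\ell(y(t))\;\geq\;w(t)\cdot\bigl[g(OPT)-G(y(t))\bigr]+\ell(OPT)\enspace.
\end{equation*}
Integrating on $[0,1]$ with $G(y(0))=0$ and $\ell(y(0))=0$, and applying integration by parts to $\int_0^1 w(t)\,dG(y(t))$, leaves a residual $\int_0^1[w(t)-w'(t)]\,G(y(t))\,dt$ whose integrand is not known a priori. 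The natural fix is to force this residual to vanish by choosing $w'(t)=w(t)$, i.e., $w(t)=Ce^{t}$; then picking $C=e^{-1}$ normalizes the coefficient of $\ell(OPT)$ to one (because $w(1)=1$) and makes the coefficient of $g(OPT)$ equal to $\int_0^1 w(t)\,dt=1-e^{-1}$, yielding exactly
\begin{equation*}
    G(y(1))+\ell(y(1))\;\geq\;(1-e^{-1})\cdot g(OPT)+\ell(OPT)\enspace.
\end{equation*}
Thus the algorithm simply uses $w(t)=e^{t-1}$.

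It remains to discretize and to control noise in the gradient estimates. All partial derivatives of $G$ are bounded in absolute value by $m$ via submodularity and monotonicity, so a Taylor expansion shows that replacing the ODE by $\eps^{-1}$ steps of length $\eps$ introduces only an $O(\eps)\cdot m$ additive error in the integrated inequality. The gradient entries $\left.\tfrac{\partial G(x)}{\partial x_u}\right|_{x=y(t)}=\bE[g(u\mid\RSet(y(t))-u)]$ can be estimated to within an additive $O(\eps)\cdot m/n$ with high probability by averaging $\mathrm{poly}(n,\eps^{-1})$ independent samples, and a union bound over the $O(\eps^{-1})$ steps preserves high probability; the resulting per-step inner-product error is $O(\eps)\cdot m$. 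Because $w(t)\leq 1$ throughout the execution, the time-dependent weight never inflates the accumulated error, so both error sources contribute only to the $O(\eps)\cdot m$ slack in the theorem.

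The main obstacle is identifying the weight function $w(t)$: one needs a function that simultaneously (i)~cancels the residual $G(y(t))$ term produced by integration by parts and (ii)~leaves the linear-term coefficient equal to one while still delivering the submodular approximation ratio $1-e^{-1}$. The choice $w(t)=e^{t-1}$ is essentially the unique function satisfying both constraints, and recognizing it is what lets us dispense with the guessing step of~\cite{SVW17}; the remainder of the argument is standard continuous-greedy bookkeeping.
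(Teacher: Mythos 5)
Your proposal takes essentially the same route as the paper: the time-varying weight is exactly $w(t)=e^{t-1}$, the differential inequality is the same (it is what the paper derives from its Lemmata~\ref{lem:derivative} and~\ref{lem:derivative_explained}), and your integration-by-parts derivation is an equivalent repackaging of the paper's potential $\Phi(t) = e^{t-1}\cdot G(y(t)) + \ell(y(t))$ (the residual $\int_0^1[w(t)-w'(t)]\,G(y(t))\,dt$ you cancel is precisely the extra term that appears when one differentiates $\Phi$). The one place you are imprecise is the discretization constant: you claim step size $\eps$ and $\eps^{-1}$ steps suffice, but the second-order Taylor error per step scales like $\delta^2 n^2 m$ (since a direction $z(t)\in P$ can have $\|z(t)\|_1$ up to $n$), so accumulating $\delta^{-1}$ steps gives a total additive error of order $\delta n^2 m$; one therefore needs $\delta = O(\eps/n^2)$ as in the paper's Algorithm~\ref{alg:formal}, not $\delta=\eps$, to land at $O(\eps)\cdot m$. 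Also, $G(y(0))=g(\varnothing)$ need not be $0$; but since it is non-negative and enters with a helpful sign, only $\Phi(0)\geq 0$ is needed and the bound survives. Both are local fixes; the core idea and its execution match the paper.
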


Theorem~\ref{thm:main_result} is very similar to the corresponding result of Sviridenko et al.~\cite{SVW17}. However, there are two differences between the two. First, Sviridenko et al.~\cite{SVW17} considered only matroid polytopes, for which there are known lossless rounding methods~\cite{CCPV11,CVZ10}, and thus, their result is stated in terms of sets rather than vectors. Second, the error term of~\cite{SVW17} depends also on $\max_{u \in \cN} |\ell_u|$, which is unnecessary for the analysis of our cleaner algorithm.
\section{Algorithm}

In this section we give a non-formal proof of Theorem~\ref{thm:main_result}. This proof demonstates our new ideas, but uses some non-formal simplifications such as
allowing a direct oracle access to the multilinear extension
$G$ of $g$ and giving the algorithm in the form
of a continuous time algorithm (which cannot be implemented
on a discrete computer).  There are known techniques for getting rid of these simplifications (see, \eg,~\cite{CCPV11}), and for completeness, Appendix~\ref{sec:DiscreteAlgorithm} includes a formal proof of Theorem~\ref{thm:main_result} based on these techniques.

The algorithm we use for the non-formal proof of Theorem~\ref{thm:main_result} is given as Algorithm~\ref{alg:distorted_continuous_greedy}. Like the original continuous greedy algorithm of~\cite{CCPV11}, this algorithm grows a solution $y(t)$ over time. The solution starts as $\characteristic_\varnothing$ at time $t = 0$, and the output of the algorithm is the solution at time $t = 1$. Our algorithm differs, however, from the original continuous greedy algorithm in the method used to determine the direction in which the solution is grown at every given time point. Specifically, our algorithm defines a weight vector $w(t)$ for every time $t \in [0, 1)$ based on the derivatives of the multilinear extension $G$. It then looks for a vector $z(t)$ in $P$ maximizing a \emph{weighted} combination of $w(t)$ with the linear function $\ell$, and this vector $z(t)$ determines the direction in which the solution $y(t)$ is grown.

\begin{algorithm}[h]
\caption{\textsf{Distorted Continuous Greedy}($g, \ell, P$)} \label{alg:distorted_continuous_greedy}
\DontPrintSemicolon
Let $y(0) \gets \characteristic_\varnothing$.\\
\ForEach{time $t \in [0, 1)$}
{
	For each $u \in \cN$, let $w_u(t) \gets \left.\frac{\partial G(x)}{\partial x_u}\right|_{x = y(t)}$.\\
	Let $z(t)$ be the vector in $P$ maximizing $z(t) \cdot (e^{t-1} \cdot w(t) + \ell)$.\\
	Increase $y(t)$ at a rate of $\frac{dy(t)}{dt} = z(t)$.
}
\Return{$y(1)$}.
\end{algorithm}

We begin the analysis of Algorithm~\ref{alg:distorted_continuous_greedy} by observing that its output is a vector in $P$.
\begin{observation} \label{obs:feasibility}
$y(1) \in P$.
\end{observation}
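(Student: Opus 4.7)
The plan is to integrate the ODE defining $y(t)$ and then invoke the convexity of $P$. Since $y(0) = \characteristic_\varnothing = \mathbf{0}$ and $\frac{dy(t)}{dt} = z(t)$, the fundamental theorem of calculus gives
\[
    y(1) = y(0) + \int_0^1 z(t)\, dt = \int_0^1 z(t)\, dt.
\]
Because $\int_0^1 dt = 1$, the right-hand side is a convex combination (in the integral sense) of the vectors $\{z(t)\}_{t \in [0,1)}$, each of which lies in $P$ by construction in Line~4 of Algorithm~\ref{alg:distorted_continuous_greedy}. Since $P$ is a polytope, it is convex and closed, so this integral lies in $P$ as well, establishing $y(1) \in P$.

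The only issue requiring a brief comment is that $z(t)$ must be well-defined (so the integral makes sense), i.e.\ the linear function $z \mapsto z \cdot (e^{t-1} w(t) + \ell)$ must attain its maximum over $P$ at every $t$; this holds because $P$ is a (compact) polytope. In the formal, discrete-time version of the algorithm in Appendix~\ref{sec:DiscreteAlgorithm}, this argument becomes even more direct: $y(1)$ will be expressed as a finite convex combination $\sum_i \lambda_i z_i$ with $\sum_i \lambda_i = 1$ and each $z_i \in P$, which lies in $P$ by convexity. I do not foresee any real obstacle in this step — it is simply the feasibility bookkeeping that is standard for continuous-greedy-style algorithms, and the only subtlety compared to the usual continuous greedy is ensuring that our modified choice rule (weighting $w(t)$ by $e^{t-1}$ and adding $\ell$) still selects a vector from $P$, which it does by definition.
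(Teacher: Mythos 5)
Your proof is correct and takes essentially the same approach as the paper: express $y(1)$ as $\int_0^1 z(t)\,dt$, observe that each $z(t)\in P$, and conclude by convexity of $P$. The extra remarks about well-definedness of $z(t)$ and the discrete-time version are harmless elaborations but not needed.
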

\begin{proof}
By definition, $z(t)$ is a vector in $P$ for every time $t \in [0, 1)$. Hence, $y(1) = \int_0^1 z(t)dt$ is a convex combination of vectors in $P$, and thus, belongs to $P$ by the convexity of $P$.
\end{proof}

Let us consider now the function $\Phi(t) \triangleq e^{t - 1} \cdot G(y(t)) + \ell(y(t))$. This function is a central component in our analysis of the approximation ratio of Algorithm~\ref{alg:distorted_continuous_greedy}. The following technical lemma gives an expression for the derivative of this important function.
\begin{lemma} \label{lem:derivative}
\[
	\frac{d\Phi(t)}{dt}
	=
	e^{t - 1} \cdot G(y(t)) + z(t) \cdot (e^{t - 1} \cdot w(t) + \ell)
	\enspace.
\]
\end{lemma}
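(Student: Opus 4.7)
The plan is to compute $\frac{d\Phi(t)}{dt}$ directly from the definition $\Phi(t) = e^{t-1} \cdot G(y(t)) + \ell(y(t))$, splitting into the two summands. For the first summand $e^{t-1} \cdot G(y(t))$, I will apply the product rule, exploiting the fact that $\frac{d}{dt} e^{t-1} = e^{t-1}$; this immediately produces the term $e^{t-1} \cdot G(y(t))$ on the right-hand side, plus a chain-rule term $e^{t-1} \cdot \frac{dG(y(t))}{dt}$.

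Next I will evaluate $\frac{dG(y(t))}{dt}$ via the multivariate chain rule:
\[
    \frac{dG(y(t))}{dt}
    =
    \sum_{u \in \cN} \left.\frac{\partial G(x)}{\partial x_u}\right|_{x = y(t)} \cdot \frac{dy_u(t)}{dt}.
\]
By the definition of $w(t)$ in Algorithm~\ref{alg:distorted_continuous_greedy}, the partial derivative in this sum equals $w_u(t)$, and by the update rule of the same algorithm, $\frac{dy_u(t)}{dt} = z_u(t)$. Hence this sum collapses to the inner product $w(t) \cdot z(t)$.

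For the second summand $\ell(y(t)) = \ell \cdot y(t)$, linearity and the same update rule give $\frac{d\ell(y(t))}{dt} = \ell \cdot z(t)$. Combining the three pieces yields
\[
    \frac{d\Phi(t)}{dt}
    =
    e^{t-1} \cdot G(y(t)) + e^{t-1} \cdot w(t) \cdot z(t) + \ell \cdot z(t)
    =
    e^{t-1} \cdot G(y(t)) + z(t) \cdot (e^{t-1} \cdot w(t) + \ell),
\]
which is exactly the claimed identity. I do not foresee a genuine obstacle here: the argument is a routine application of the product and chain rules, and the only thing to double-check is that the partial derivatives of the multilinear extension $G$ are well defined and continuous (which holds since $G$ is a polynomial in its coordinates), so that differentiating under the continuous-time update is justified.
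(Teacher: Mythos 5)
Your proof is correct and follows essentially the same route as the paper's: apply the product rule to the factor $e^{t-1}$, use the multivariate chain rule to expand $\frac{dG(y(t))}{dt}$ and $\frac{d\ell(y(t))}{dt}$, and then substitute the algorithm's definitions $w_u(t) = \left.\frac{\partial G(x)}{\partial x_u}\right|_{x=y(t)}$ and $\frac{dy_u(t)}{dt} = z_u(t)$. The only cosmetic difference is that the paper writes the linear term via its partial derivatives $\left.\frac{\partial \ell(x)}{\partial x_u}\right|_{x=y(t)} = \ell_u$ rather than invoking linearity directly, which is an equivalent computation.
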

\begin{proof}
By the chain rule,
\begin{align*}
	\frac{d\Phi(t)}{dt}
	={}&
	e^{t - 1} \cdot G(y(t)) + e^{t - 1} \cdot \frac{dG(y(t))}{dt} + \frac{d\ell(y(t))}{dt}\\
	={} &
	e^{t - 1} \cdot G(y(t)) + e^{t - 1} \cdot \sum_{u \in \cN} \frac{dy_u(t)}{dt} \cdot \left.\frac{\partial G(x)}{\partial x_u}\right|_{x = y(t)} + \sum_{u \in \cN} \frac{dy_u(t)}{dt} \cdot \left.\frac{\partial \ell(x)}{\partial x_u}\right|_{x = y(t)} \\
	={} &
	e^{t - 1} \cdot G(y(t)) + e^{t - 1} \cdot \sum_{u \in \cN} z_u(t) \cdot w_u(t) + \sum_{u \in \cN} z_u(t) \cdot \ell_u\\
	={} &
	e^{t - 1} \cdot G(y(t)) + z(t) \cdot (e^{t - 1} \cdot w(t) + \ell)
	\enspace.
	\qedhere
\end{align*}
\end{proof}

The next lemma lower bounds the expression given by the last lemma for the derivative of $\Phi(t)$.
\begin{lemma} \label{lem:derivative_explained}
For every $t \in [0, 1)$,
\[
	e^{t - 1} \cdot G(y(t)) + z(t) \cdot (e^{t - 1} \cdot w(t) + \ell)
	\geq
	e^{t - 1} \cdot g(OPT) + \ell(OPT)
	\enspace.
\]
\end{lemma}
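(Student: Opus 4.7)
The plan is to reduce the claim to a standard submodular inequality using the optimality of $z(t)$, and then apply the partial-derivative formula for the multilinear extension. Observe first that since $\characteristic_{OPT} \in P$ (by the choice of $OPT$) and $z(t)$ is defined as the maximizer of the linear functional $z \mapsto z \cdot (e^{t-1} \cdot w(t) + \ell)$ over $P$, we have
\[
z(t) \cdot (e^{t-1} \cdot w(t) + \ell) \geq \characteristic_{OPT} \cdot (e^{t-1} \cdot w(t) + \ell) = e^{t-1} \cdot w(t) \cdot \characteristic_{OPT} + \ell(OPT).
\]
Plugging this into the left-hand side of the lemma, it suffices to prove the ``submodular half''
\[
G(y(t)) + w(t) \cdot \characteristic_{OPT} \geq g(OPT),
\]
after which multiplying by $e^{t-1}$ and adding $\ell(OPT)$ yields the claim.

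Next, I would expand $w(t) \cdot \characteristic_{OPT}$ using the partial-derivative identity from the preliminaries, $w_u(t) = \bE[g(u \mid \RSet(y(t)) - u)]$, obtaining
\[
w(t) \cdot \characteristic_{OPT} = \bE\mleft[\sum_{u \in OPT} g(u \mid \RSet(y(t)) - u)\mright].
\]
Now the core inequality, for every realization $S = \RSet(y(t))$, is
\[
\sum_{u \in OPT} g(u \mid S - u) \geq g(S \cup OPT) - g(S).
\]
I would prove this by splitting the sum according to whether $u \in S$ or not: for $u \in OPT \cap S$ monotonicity gives $g(u \mid S - u) \geq 0$, while for $u \in OPT \setminus S$ we have $g(u \mid S - u) = g(u \mid S)$. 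Then a standard telescoping argument (enumerating $OPT \setminus S = \{v_1, \dots, v_m\}$ and combining $g(S \cup OPT) - g(S) = \sum_i g(v_i \mid S \cup \{v_1,\dots,v_{i-1}\})$ with submodularity $g(v_i \mid S \cup \{v_1,\dots,v_{i-1}\}) \leq g(v_i \mid S)$) yields the displayed inequality.

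Taking expectations and using monotonicity in the form $g(S \cup OPT) \geq g(OPT)$, together with $\bE[g(\RSet(y(t)))] = G(y(t))$, gives
\[
w(t) \cdot \characteristic_{OPT} \geq g(OPT) - G(y(t)),
\]
which is exactly the reduced inequality. The main conceptual obstacle is the submodularity step; it is tempting to try to bound $g(u \mid S - u)$ directly by $g(u \mid S \cup OPT - u)$, which goes in the wrong direction, so one has to be careful to condition correctly on whether $u$ lies in $S$ before applying the telescoping argument. Everything else is a mechanical combination of the optimality of $z(t)$, the derivative formula for $G$, and linearity of expectation.
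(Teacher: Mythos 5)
Your proof is correct and follows the same overall structure as the paper's: use the optimality of $z(t)$ with $\characteristic_{OPT} \in P$ to reduce to showing $\sum_{u \in OPT} w_u(t) \geq g(OPT) - G(y(t))$, and then prove that inequality from monotonicity and submodularity. The only difference is where the submodular telescoping is carried out. The paper works at the level of the multilinear extension, writing $w_u(t) = G(y(t) \vee \characteristic_{\{u\}}) - G(y(t) \wedge \characteristic_{\cN - u})$, dropping to $G(y(t) \vee \characteristic_{\{u\}}) - G(y(t))$ by monotonicity, telescoping against $G(y(t) \vee \characteristic_{OPT}) - G(y(t))$, and finally applying monotonicity once more. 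You instead unpack $w_u(t) = \bE[g(u \mid \RSet(y(t)) - u)]$, prove the deterministic inequality $\sum_{u \in OPT} g(u \mid S - u) \geq g(S \cup OPT) - g(S)$ realization by realization, and take expectations. These are two phrasings of the same estimate (the paper's extension-level inequality is itself proved by exactly your kind of telescoping), so neither buys more generality; yours is perhaps a touch more elementary and self-contained since it spells out the telescoping explicitly, while the paper's is more compact by invoking standard multilinear-extension facts. One small quibble: the remark that bounding $g(u \mid S - u)$ by $g(u \mid S \cup OPT - u)$ ``goes in the wrong direction'' is a bit garbled (that bound is in fact valid by submodularity; the issue is that the resulting sum telescopes to an upper bound rather than a lower bound on $g(S \cup OPT) - g(S)$), but this does not affect the correctness of the argument you actually give.
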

\begin{proof}
Recall that $z(t)$ is chosen by Algorithm~\ref{alg:distorted_continuous_greedy} as the vector in $P$ maximizing $z(t) \cdot (e^{t - 1} \cdot w(t) + \ell)$. Since $\characteristic_{OPT} \in P$, we get
\[
	z(t) \cdot (e^{t - 1} \cdot w(t) + \ell)
	\geq
	\characteristic_{OPT} \cdot (e^{t - 1} \cdot w(t) + \ell)
	=
	e^{t - 1} \cdot \sum_{u \in OPT} \mspace{-9mu} w_u(t) + \ell(OPT)
	\enspace.
\]

Note now that, by the multilinearity of $G$,
\begin{align*}
	\sum_{u \in OPT} \mspace{-9mu} w_u(t)
	={} &
	\sum_{u \in OPT} \mspace{-9mu} \left.\frac{\partial G(x)}{\partial x_u}\right|_{x = y(t)}
	=
	\sum_{u \in OPT} \mspace{-9mu} [G(y(t) \vee \characteristic_{\{u\}}) - G(y(t) \wedge \characteristic_{\cN - u})]\\ \nonumber
	\geq{} &
	\sum_{u \in OPT} \mspace{-9mu} [G(y(t) \vee \characteristic_{\{u\}}) - G(y(t))]
	\geq
	G(y(t) \vee \characteristic_{OPT}) - G(y(t))
	\geq
	g(OPT) - G(y(t))
	\enspace,
\end{align*}
where the first and last inequalities follow from the monotonicity of $g$, and the remaining inequality follows from its submodularity.

Combining the two above inequalities yields
\[
	z(t) \cdot (e^{t - 1} \cdot w(t) + \ell)
	\geq
	e^{t - 1} \cdot [g(OPT) - G(y(t))] + \ell(OPT)
	\enspace,
\]
and the lemma now follows by adding $e^{t-1} \cdot G(y(t))$ to both sides of this inequality.
\end{proof}

We are now ready to prove Theorem~\ref{thm:main_result}.

\begin{proof}[Proof of Theorem~\ref{thm:main_result}]
Observation~\ref{obs:feasibility} shows that $y(1) \in P$. Thus, to prove the theorem it only remains to prove $G(y(1)) + \ell(y(1)) \geq (1 - e^{-1}) \cdot g(OPT) + \ell(OPT)$. 

Lemmata~\ref{lem:derivative} and~\ref{lem:derivative_explained} prove together that
\[
	\frac{d\Phi(t)}{dt}
	=
	e^{t - 1} \cdot G(y(t)) + z(t) \cdot (e^{t - 1} \cdot w(t) + \ell)
	\geq
	e^{t - 1} \cdot g(OPT) + \ell(OPT)
	\enspace.
\]
Integrating both sides of this inequality from $t = 0$ to $t = 1$, we get
\[
	\Phi(1) - \Phi(0)
	\geq
	(1 - e^{-1}) \cdot g(OPT) + \ell(OPT)
	\enspace,
\]
and the theorem now follows by noticing that
\[
	\Phi(1)
	=
	G(y(1)) + \ell(y(1))
	\qquad
	\text{and}
	\qquad
	\Phi(0)
	=
	e^{-1} \cdot G(y(0)) + \ell(y(0))
	\geq
	0
\]
(the last inequality holds since $g$ is non-negative and $\ell(y(0)) = \ell(\characteristic_\varnothing) = 0$).
\end{proof}

\bibliographystyle{plain}
\bibliography{Curvature}

\appendix
\section{Formal Proof of Theorem~\ref{thm:main_result}} \label{sec:DiscreteAlgorithm}

In this section we give a formal proof of Theorem~\ref{thm:main_result}. The algorithm that we use for this proof is given as Algorithm~\ref{alg:formal}. Notice that this algorithm considers only discrete times that are integer multiples of a value $\delta$ chosen in a way guaranteeing two things. First, that $\delta \leq \min\{\nicefrac{1}{2}, \eps n^{-2}\}$, and second, that $1$ is an integer multiple of $\delta$.

\begin{algorithm}[h]
\caption{\textsf{Distorted Continuous Greedy -- Formal}($g, \ell, P, \eps$)} \label{alg:formal}
\DontPrintSemicolon
Let $y(0) \gets \characteristic_\varnothing$, $t \gets 0$ and $\delta \gets \lceil 2 + n^2/\eps \rceil^{-1}$.\\
\While{$t < 1$}
{
	For each $u \in \cN$, let $w_u(t)$ be an estimate for $\bE[g(u \mid \RSet(y(t)) - u)]$ obtained by averaging the value of the expression within this expectation for $r = \lceil -2n^2\eps^{-2} \ln (\delta/n^2) \rceil$ independent samples of $\RSet(y(t))$.\\
	Let $z(t)$ be the vector in $P$ maximizing $z(t) \cdot [(1 + \delta)^{(t-1)/\delta} \cdot w(t) + \ell]$.\\
	Let $y(t + \delta) \gets y(t) + \delta \cdot z(t)$.\\
	Update $t \gets t + \delta$.
}
\Return{$y(1)$}.
\end{algorithm}

Let $T$ be the set of times considered by Algorithm~\ref{alg:formal}, \ie, $T = \{i\delta \mid i \in \bZ, 0 \leq i < \delta^{-1}\}$. The following observation, which corresponds to Observation~\ref{obs:feasibility}, shows that the output of Algorithm~\ref{alg:formal} is feasible.
\begin{observation} \label{obs:feasibility_formal}
$y(1) \in P$.
\end{observation}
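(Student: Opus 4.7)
The plan is to mirror the proof of Observation~\ref{obs:feasibility} from the continuous setting, replacing the integral by the corresponding Riemann sum. Concretely, I would unroll the update rule $y(t+\delta) \gets y(t) + \delta \cdot z(t)$ starting from $y(0) = \characteristic_\varnothing = \mathbf{0}$, so that
\[
    y(1) = y(0) + \delta \sum_{t \in T} z(t) = \delta \sum_{t \in T} z(t).
\]
Since $T = \{0, \delta, 2\delta, \dots, 1-\delta\}$ has cardinality $\delta^{-1}$ (which is an integer by the choice of $\delta$), the scalar coefficients $\delta$ in the sum are non-negative and add up to $\delta \cdot \delta^{-1} = 1$. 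Thus $y(1)$ is a convex combination of the vectors $\{z(t)\}_{t \in T}$.

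Each $z(t)$ lies in $P$ by its definition inside the while-loop (it is selected as a maximizer of a linear function over $P$, so in particular it is an element of $P$). Applying convexity of $P$ to the convex combination above yields $y(1) \in P$, as required.

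There is essentially no obstacle here: the argument is identical in spirit to the continuous-time version, with the only care needed being the bookkeeping that $|T| = \delta^{-1}$ so that the uniform weights $\delta$ form a proper convex combination. This is guaranteed by the choice of $\delta$ in line~1 of Algorithm~\ref{alg:formal}, which explicitly ensures that $1$ is an integer multiple of $\delta$.
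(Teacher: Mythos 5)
Your proof is correct and follows the same argument as the paper: unroll the update to write $y(1) = \delta \sum_{t \in T} z(t)$, observe that $|T| = \delta^{-1}$ so the uniform weights $\delta$ sum to $1$, and conclude by convexity of $P$ since each $z(t) \in P$.
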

\begin{proof}
By definition, $z(t)$ is a vector in $P$ for every time $t \in T$. Observe also that $|T| = \delta^{-1}$. Hence,
$
	y(1) = \sum_{t \in T} \delta \cdot z(t)
$
is a convex combination of vectors in $P$, and thus, belongs to $P$ by the convexity of $P$.
\end{proof}

Next, we need to lower bound the probability that any of the estimates made by Algorithm~\ref{alg:formal} has a significant error. This is done by Lemma~\ref{lem:error_probability}, whose proof is based on the following known lemma.
\begin{lemma}[The symmetric version of Theorem A.1.16 in~\cite{AS00}] \label{lem:concentration}
Let $X_i$, $1 \leq i \leq k$, be mutually independent with all $\bE[X_i] = 0$ and all $|X_i| \leq 1$. Set $S = X_1 + \dotsb + X_k$. Then, $\Pr[|S| > a] \leq 2e^{-a^2/2k}$.
\end{lemma}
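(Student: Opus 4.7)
The plan is to apply the standard Chernoff--Cram\'er exponential moment method. First I would apply Markov's inequality to the non-negative random variable $e^{\lambda S}$, for a parameter $\lambda > 0$ to be optimized later, obtaining
\[
\Pr[S > a] \;\leq\; e^{-\lambda a} \cdot \bE[e^{\lambda S}].
\]
Mutual independence of the $X_i$ then lets the moment generating function factor as $\bE[e^{\lambda S}] = \prod_{i=1}^k \bE[e^{\lambda X_i}]$, which reduces the task to controlling a single coordinate.

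The main obstacle, and the heart of the proof, is to establish a Hoeffding-type moment bound $\bE[e^{\lambda X_i}] \leq e^{\lambda^2/2}$ for each zero-mean variable with $|X_i| \leq 1$. My plan for this is the standard convexity argument: since $x \mapsto e^{\lambda x}$ is convex and every $x \in [-1,1]$ equals the convex combination $\frac{1+x}{2}\cdot 1 + \frac{1-x}{2}\cdot (-1)$, one has
\[
e^{\lambda x} \;\leq\; \tfrac{1+x}{2}\, e^{\lambda} + \tfrac{1-x}{2}\, e^{-\lambda}.
\]
Taking expectations and using $\bE[X_i] = 0$ yields $\bE[e^{\lambda X_i}] \leq \cosh(\lambda)$, and comparing Taylor series term by term gives $\cosh(\lambda) \leq e^{\lambda^2/2}$.

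Plugging back produces $\Pr[S > a] \leq e^{-\lambda a + k\lambda^2/2}$, and choosing $\lambda = a/k$ minimizes the exponent at $-a^2/(2k)$, giving the one-sided bound $\Pr[S > a] \leq e^{-a^2/(2k)}$. Finally, to obtain the two-sided statement I would apply exactly the same argument to the variables $-X_i$, which are again independent, zero mean, and bounded by $1$ in absolute value, yielding $\Pr[-S > a] \leq e^{-a^2/(2k)}$. A union bound over the two tails produces $\Pr[|S| > a] \leq 2 e^{-a^2/(2k)}$, as claimed. Every step after the moment bound is routine; the convexity inequality above is the only nontrivial ingredient.
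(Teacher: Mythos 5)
Your proof is correct: the Markov/exponential-moment step, factoring by independence, the convexity bound $e^{\lambda x}\leq\frac{1+x}{2}e^{\lambda}+\frac{1-x}{2}e^{-\lambda}$ giving $\bE[e^{\lambda X_i}]\leq\cosh(\lambda)\leq e^{\lambda^2/2}$, the optimization $\lambda=a/k$, and the two-sided union bound all go through. Note, however, that the paper itself does not prove this lemma; it cites it as a known result (Theorem A.1.16 in Alon and Spencer), and your argument is essentially the standard Chernoff--Hoeffding derivation given in that reference.
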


Let $\cE$ be the event that
\[
	|w_u(t) - \bE[g(u \mid \RSet(y(t)) - u)]|
	\leq
	\frac{\eps m}{n}
\]
for every every element $u \in \cN$ and time $t \in T$. 
\begin{lemma} \label{lem:error_probability}
$\Pr[\cE] \geq 1 - 2n^{-1}$, and hence, $\cE$ is a high probability event.
\end{lemma}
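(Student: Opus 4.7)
The plan is a standard concentration-plus-union-bound argument. For a fixed element $u \in \cN$ and time $t \in T$, the quantity $w_u(t)$ is the empirical mean of $r$ i.i.d.\ samples of $g(u \mid \RSet(y(t)) - u)$. Since $g$ is non-negative and monotone submodular, each such sample lies in $[0, m]$: it is non-negative by monotonicity, and bounded by $g(u \mid \varnothing) \leq m$ by submodularity. So after centering and rescaling by $m$, we are in exactly the setting of Lemma~\ref{lem:concentration}.

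Concretely, I would set
\[
    X_i \;=\; \frac{g(u \mid \RSet_i - u) - \bE[g(u \mid \RSet(y(t)) - u)]}{m}
\]
for the $i$-th sample, giving mutually independent variables with $\bE[X_i] = 0$ and $|X_i| \leq 1$. The event that $w_u(t)$ deviates from $\bE[g(u \mid \RSet(y(t)) - u)]$ by more than $\eps m/n$ is then the event $|S| > \eps r / n$ for $S = X_1 + \dots + X_r$. Lemma~\ref{lem:concentration} gives
\[
    \Pr\!\left[|S| > \frac{\eps r}{n}\right] \;\leq\; 2 \exp\!\left(-\frac{\eps^2 r}{2n^2}\right),
\]
and plugging in $r \geq -2 n^2 \eps^{-2} \ln(\delta / n^2)$ makes the exponential at most $\delta/n^2$, so each individual bad event has probability at most $2\delta/n^2$.

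To finish, I would union bound over the $n \cdot |T| = n \cdot \delta^{-1}$ pairs $(u,t)$: the probability that any estimate is off by more than $\eps m/n$ is at most $n \cdot \delta^{-1} \cdot 2\delta/n^2 = 2/n$, so $\Pr[\cE] \geq 1 - 2/n$ as claimed. The only mildly delicate step is verifying that each sample really lies in $[0, m]$ so that the $|X_i| \leq 1$ hypothesis of Lemma~\ref{lem:concentration} applies; everything else is mechanical bookkeeping with the choices of $r$ and $\delta$ already fixed by the algorithm.
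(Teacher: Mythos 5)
Your proposal is essentially identical to the paper's proof: same centering and rescaling of the samples by $m$, same verification that $|X_i| \leq 1$ via monotonicity and submodularity, same application of Lemma~\ref{lem:concentration} with the threshold $r\eps/n$, and the same union bound over the $n\cdot\delta^{-1}$ pairs $(u,t)$. The argument is correct and matches the paper step for step.
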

\begin{proof}
Consider an arbitrary element $u \in \cN$ and time $t \in T$, and let us denote by $R_i$ the $i$-th independent sample of $\RSet(y(t))$ used for calculating $w_u(t)$. We now define for every $1 \leq i \leq r$
\[
	X_i
	=
	\frac{[g(u \mid R_i - u)] - \bE[g(u \mid \RSet(y(t)) - u)]}{m}
	\enspace.
\]

Clearly $\bE[X_i] = 0$ due to the linearity of the expectation. Additionally, note that $X_i \in [-1, 1]$ because the monotonicity of $g$ guarantees that $g(u \mid R_i - u)$ and $\bE[g(u \mid \RSet(y(t)) - u)]$ are both non-negative, and the submodularity of $g$ guarantees that these expressions are upper bounded by $f(u \mid \varnothing) \leq m$. Thus, by Lemma~\ref{lem:concentration},
\begin{align*}
	\Pr\mleft[|w_u(t) - \bE[g(u \mid \RSet(y(t)) - u)]|
	>
	\frac{\eps m}{n}\mright]
	={} &
	\Pr\mleft[\frac{m}{r} \cdot \mleft|\sum_{i=1}^r X_i\mright|
	>
	\frac{\eps m}{n}\mright]
	=
	\Pr\mleft[\mleft|\sum_{i=1}^r X_i\mright|
	>
	\frac{r\eps}{n}\mright]\\
	\leq{} &
	2e^{-(r\eps n^{-1})^2/2r}
	=
	2e^{-r\eps^2/(2n^2)}
	\leq
	2e^{\ln (\delta / n^2)}
	=
	\frac{2\delta}{n^2}
	\enspace.
\end{align*}

Using the union bound, we now get that the probability that there is any pair of element $u \in \cN$ and time $t \in T$ for which
\[
	|w_u(t) - \bE[g(u \mid \RSet(y(t)) - u)]|	> \frac{\eps m}{n}
\]
is at most
\[
	|\cN| \cdot |T| \cdot \frac{2\delta}{n^2}
	=
	\frac{2}{n}
	\enspace.
	\qedhere
\]
\end{proof}

Let us define now $\Phi(t) \triangleq (1 + \delta)^{(t-1)/\delta} \cdot G(y(t)) + \ell(y(t))$. Lemma~\ref{lem:derivative_formal} bounds the rate in which this expression increases as a function of $t$ (and thus, can be viewed as a counterpart of Lemma~\ref{lem:derivative}). The following technical lemma is used in the proof of Lemma~\ref{lem:derivative_formal}. Since similar lemmata have been proved in other places (see, for example,~\cite{F17,FNS11}), we defer the proof of this lemma to Appendix~\ref{app:step_improvement}.
\begin{lemma} \label{lem:step_improvement}
Given two vectors $y, y' \in [0, 1]^\cN$ such that $0 \leq y'_u - y_u \leq \delta \leq 1$ and a non-negative monotone submodular function $f\colon 2^\cN \to \nnR$ whose multilinear extension is $F$,
\[
	F(y') - F(y)
	\geq
	\sum_{u \in \cN} (y'_u - y_u) \cdot \left.\frac{\partial F(x)}{\partial x_u}\right|_{x = y} - n^2\delta^2 \cdot \max_{u \in \cN} f(u \mid \varnothing)
	\enspace.
\]
\end{lemma}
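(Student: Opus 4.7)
The plan is to interpolate from $y$ to $y'$ one coordinate at a time and control the resulting telescoping sum via the multilinearity of $F$. Order the ground set arbitrarily as $u_1,\dots,u_n$, set $y^{(0)}=y$, and let $y^{(k)}$ be obtained from $y^{(k-1)}$ by replacing its $u_k$-coordinate with $y'_{u_k}$. Since $F$ is linear in each single coordinate, one has the exact identity
\[
F(y^{(k)})-F(y^{(k-1)})=(y'_{u_k}-y_{u_k})\cdot\left.\frac{\partial F(x)}{\partial x_{u_k}}\right|_{x=y^{(k-1)}},
\]
and summing over $k$ recovers $F(y')-F(y)$ exactly. The task therefore reduces to showing that the partial derivative evaluated at the intermediate point $y^{(k-1)}$ is not much smaller than the same derivative evaluated at the reference point $y$.

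Writing $M=\max_{u\in\cN}f(u\mid\varnothing)$, the essential input is the pointwise bound $-M\le\partial^2 F/(\partial x_{u_j}\partial x_{u_k})\le 0$ for every pair of distinct coordinates $j,k$. Indeed, by multilinearity this second partial equals $\bE[f(u_k\mid R+u_j)-f(u_k\mid R)]$ for a suitable random subset $R$ of $\cN\setminus\{u_j,u_k\}$; submodularity forces this expected difference to be non-positive, while monotonicity together with the submodular bound $f(u_k\mid R)\le f(u_k\mid\varnothing)\le M$ forces it to lie above $-M$. Since $\partial F/\partial x_{u_k}$ is itself multilinear in the coordinates other than $u_k$, an analogous one-coordinate-at-a-time telescoping along the path $y\to y^{(k-1)}$ then yields
\[
0\le\left.\frac{\partial F(x)}{\partial x_{u_k}}\right|_{x=y}-\left.\frac{\partial F(x)}{\partial x_{u_k}}\right|_{x=y^{(k-1)}}\le\sum_{j<k}(y'_{u_j}-y_{u_j})\cdot M\le n\delta M.
\]

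Substituting this back into the first telescoping identity gives $F(y^{(k)})-F(y^{(k-1)})\ge(y'_{u_k}-y_{u_k})\cdot\left.\partial F(x)/\partial x_{u_k}\right|_{x=y}-(y'_{u_k}-y_{u_k})\cdot n\delta M$ for each $k$, and summing over the $n$ choices of $k$ produces the claimed inequality with cumulative slack at most $n\cdot\delta\cdot n\delta M=n^2\delta^2 M$. The main obstacle is the pointwise bound on the mixed second partial: one must combine monotonicity (to pin down the $-M$ lower bound) with submodularity (to pin down the $0$ upper bound) and verify that the bound is uniform over the evaluation point. Once that bound is in hand, the rest is a mechanical telescoping computation that falls straight out of multilinearity.
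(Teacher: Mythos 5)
Your proof is correct, and it takes a genuinely different route for the key technical step. Both arguments share the same outer structure: interpolate from $y$ to $y'$ one coordinate at a time, use multilinearity to write $F(y^{(k)})-F(y^{(k-1)})$ as the coordinate increment times the partial derivative at $y^{(k-1)}$, and then argue that this partial derivative is within $n\delta\cdot\max_u f(u\mid\varnothing)$ of the partial derivative at $y$ (compare the paper's Lemma~\ref{lem:derivative_change}). Where you diverge is in how that derivative shift is bounded. The paper uses a probabilistic coupling: it realizes $\RSet(\partsol{i})$ as $\RSet(y)\cup D$ for an independently sampled ``inflation'' set $D$ with $\Pr[D\neq\varnothing]\le 1-(1-\delta)^n\le n\delta$, and then splits the expectation defining the derivative according to whether $D$ is empty. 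You instead bound the mixed second partials $\partial^2 F/(\partial x_{u_j}\partial x_{u_k})$ pointwise in $[-M,0]$ using monotonicity and submodularity, and run a second telescoping along the prefix $y\to y^{(k-1)}$. Your approach is more in the spirit of a Lipschitz estimate on $\nabla F$ and sidesteps the coupling construction entirely, at the modest cost of introducing and controlling second-order partials; the paper's coupling argument stays closer to the set-function picture and never leaves first-order quantities. Both give the same $n\delta M$ shift bound per coordinate and hence the same $n^2\delta^2 M$ total error.
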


\begin{lemma} \label{lem:derivative_formal}
If the event $\cE$ happens, then, for every time $t \in T$,
\[
	\frac{\Phi(t + \delta) - \Phi(t)}{\delta}
	\geq
	(1 + \delta)^{(t - 1)/\delta} \cdot G(y(t)) + z(t) \cdot [(1 + \delta)^{(t - 1)/\delta} \cdot w(t) + \ell] - 2\eps m
	\enspace.
\]
\end{lemma}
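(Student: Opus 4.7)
The plan is to expand $\Phi(t+\delta)-\Phi(t)$ exactly, match it term by term against the desired right-hand side, and reduce everything to a single bound on the multilinear ``growth'' $G(y(t+\delta))-G(y(t))$. Using the identity $(1+\delta)^{t/\delta} - (1+\delta)^{(t-1)/\delta} = \delta(1+\delta)^{(t-1)/\delta}$, the telescoping difference decomposes as
\[
\Phi(t+\delta)-\Phi(t) = \delta(1+\delta)^{(t-1)/\delta}G(y(t)) + (1+\delta)^{t/\delta}\bigl[G(y(t+\delta))-G(y(t))\bigr] + \delta\, z(t)\cdot\ell.
\]
After dividing by $\delta$, the first and third summands already match the $G(y(t))$ and $z(t)\cdot\ell$ contributions on the claimed right-hand side, so the task reduces to showing
\[
(1+\delta)^{t/\delta}\bigl[G(y(t+\delta))-G(y(t))\bigr] \geq \delta(1+\delta)^{(t-1)/\delta}\, z(t)\cdot w(t) - 2\eps m\delta.
\]

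For the growth term I would apply Lemma~\ref{lem:step_improvement} with $y\gets y(t)$ and $y'\gets y(t+\delta)=y(t)+\delta z(t)$, whose coordinate-wise differences lie in $[0,\delta]$ since $z(t)\in P\subseteq[0,1]^{\cN}$. Under event $\cE$ each expectation $\bE[g(u\mid\RSet(y(t))-u)]$ can be swapped for $w_u(t)$ at additive cost at most $\eps m/n$, and summing against $z(t)$ (which satisfies $\sum_u z_u(t)\leq n$) costs at most $\eps m$. Combining this with $\delta\leq \eps/n^2$, which absorbs the $n^2\delta^2 m$ error from Lemma~\ref{lem:step_improvement} into a second $\delta\eps m$, yields
\[
G(y(t+\delta))-G(y(t)) \geq \delta\cdot z(t)\cdot w(t) - 2\delta\eps m.
\]

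The main obstacle I anticipate is reconciling the exponent $(1+\delta)^{t/\delta}$ that arises naturally on the growth term with the exponent $(1+\delta)^{(t-1)/\delta}$ demanded by the target. Multiplying the previous bound directly by $(1+\delta)^{t/\delta}$ would introduce a spurious factor $(1+\delta)$ on $z(t)\cdot w(t)$ that cannot be discarded when $z(t)\cdot w(t)<0$ (note that $w_u(t)$ can be negative as a noisy estimate). My workaround is to invoke monotonicity of $G$ first: since $G(y(t+\delta))-G(y(t))\geq 0$ and $(1+\delta)^{t/\delta}\geq(1+\delta)^{(t-1)/\delta}$, the left-hand side is only weakened by replacing the former exponent by the latter. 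After this replacement I multiply the step-improvement bound by the positive scalar $(1+\delta)^{(t-1)/\delta}$ and finally use $(1+\delta)^{(t-1)/\delta}\leq 1$ (valid because $t\leq 1-\delta$ for $t\in T$) to lower-bound the residual error $-2\delta\eps m\cdot(1+\delta)^{(t-1)/\delta}$ by $-2\delta\eps m$. Chaining these inequalities and dividing by $\delta$ delivers the statement of the lemma.
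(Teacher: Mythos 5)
Your proof is correct and takes essentially the same approach as the paper: telescope $\Phi(t+\delta)-\Phi(t)$, apply Lemma~\ref{lem:step_improvement} together with event $\cE$ and $\delta\le\eps/n^2$ to bound the $G$-increment, and use monotonicity of $G$ once to reconcile the extra factor of $(1+\delta)$ on the growth term before discarding $(1+\delta)^{(t-1)/\delta}\le 1$ in the error (the paper groups the algebra so that monotonicity is instead applied to a $G(y(t+\delta))\ge G(y(t))$ term, but the two arrangements are interchangeable). One small slip: the identity you invoke should read $(1+\delta)^{(t+\delta-1)/\delta}-(1+\delta)^{(t-1)/\delta}=\delta(1+\delta)^{(t-1)/\delta}$; as written with exponent $t/\delta$ it is false, though your subsequent steps plainly use the correct exponent $(t+\delta-1)/\delta$, and your caution that $w_u(t)$ might be negative is moot (it averages non-negative marginals of a monotone $g$) but costs nothing.
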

\begin{proof}
Since $y(t + \delta) - y(t) = \delta z(t)$, and every coordinate of $z(t)$ is between $0$ and $1$, we get by Lemma~\ref{lem:step_improvement} that
\begin{align} \label{eq:G_diff}
	G(y(t + \delta)) - G(y(t))
	\geq{} &
	\sum_{u \in \cN} (y(t + \delta) - y(t)) \cdot \left.\frac{\partial G(x)}{\partial x_u}\right|_{x = y(t)} - n^2\delta^2 \cdot \max_{u \in \cN} g(u \mid \varnothing)\\ \nonumber
	\geq{} &
	\sum_{u \in \cN} \delta z_u(t) \cdot \bE[g(u \mid \RSet(y(t)) - u)] - \eps\delta m\\ \nonumber
	\geq{} &
	\sum_{u \in \cN} \delta z_u(t) \cdot [w_u(t) - \eps m/n] - \eps\delta m
	\geq
	\delta z(t) \cdot w(t) - 2\eps\delta m
	\enspace,
\end{align}
where the second inequality hold since $\delta \leq \eps n^{-2}$ by definition, and the third inequality holds since we assume that the event $\cE$ happened.

Using the linearity of $\ell$ and the definition of $\Phi$, we now get
\begin{align*}
	&
	\frac{\Phi(t + \delta) - \Phi(t)}{\delta}\\
	={} &
	\frac{[(1 + \delta)^{(t + \delta - 1)/\delta} \cdot G(y(t + \delta)) - (1 + \delta)^{(t - 1)/\delta} \cdot G(y(t))] + [\ell(y(t + \delta)) - \ell(y(t))]}{\delta}\\
	={} &
	\frac{(1 + \delta)^{(t - 1)/\delta} \cdot [G(y(t + \delta)) - G(y(t))] }{\delta} + (1 + \delta)^{(t - 1)/\delta} \cdot G(y(t + \delta)) + \ell \cdot z(t)
	\enspace.
\end{align*}
Plugging Inequality~\eqref{eq:G_diff} and the inequality $G(y(t + \delta)) \geq G(y(t))$ (which holds due to monotonicity) into the last equality, we get
\[
	\frac{\Phi(t + \delta) - \Phi(t)}{\delta}
	\geq
	(1 + \delta)^{(t - 1)/\delta} \cdot G(y(t)) + z(t) \cdot [(1 + \delta)^{(t - 1)/\delta} \cdot w(t) + \ell] - 2\eps m \cdot (1 + \delta)^{(t - 1)/\delta}
	\enspace.
\]
The lemma now follows from the last inequality by observing that $(1 + \delta)^{(t - 1)/\delta }\leq 1$ since $(t - 1)/\delta \leq 0$.
\end{proof}

The last lemma gives a lower bound on the increase in $\Phi(t)$ as a function of $t$. Unfortunately, this lower bound depends on a lot of entities (such as $z(t)$ and $w(t)$), and thus, it is difficult to use it. The following lemma allows us to simplify the lower bound.
\begin{lemma}
If the event $\cE$ happens, then
$z(t) \cdot [(1 + \delta)^{(t - 1)/\delta} \cdot w(t) + \ell] \geq (1 + \delta)^{(t - 1)/\delta} \cdot [g(OPT) - G(y(t))] + \ell(OPT) - \eps m$.
\end{lemma}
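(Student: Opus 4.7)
The plan is to mimic the proof of Lemma~\ref{lem:derivative_explained} almost verbatim, but with an extra error term to account for the fact that $w_u(t)$ is now only an estimate of the partial derivative rather than the exact value. First I would exploit the choice of $z(t)$ in Algorithm~\ref{alg:formal}: since $z(t)$ maximizes the linear objective $z \cdot [(1+\delta)^{(t-1)/\delta} w(t) + \ell]$ over $P$ and $\characteristic_{OPT} \in P$, substituting $\characteristic_{OPT}$ yields
\[
z(t) \cdot [(1+\delta)^{(t-1)/\delta} \cdot w(t) + \ell]
\geq
(1+\delta)^{(t-1)/\delta} \cdot \sum_{u \in OPT} w_u(t) + \ell(OPT).
\]

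Next I would replace $w_u(t)$ by $\bE[g(u \mid \RSet(y(t)) - u)]$, losing at most $\eps m / n$ per coordinate under the event $\cE$. Summing over $u \in OPT$ (and using $|OPT| \leq n$) this costs at most $\eps m$ in total, giving
\[
\sum_{u \in OPT} w_u(t)
\geq
\sum_{u \in OPT} \bE[g(u \mid \RSet(y(t)) - u)] - \eps m.
\]
Now the sum on the right-hand side is exactly $\sum_{u \in OPT} \left.\partial G / \partial x_u\right|_{x = y(t)}$, so the chain of monotonicity/submodularity inequalities used in the proof of Lemma~\ref{lem:derivative_explained} applies unchanged and bounds this sum from below by $g(OPT) - G(y(t))$.

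Combining the two estimates, we obtain
\[
z(t) \cdot [(1+\delta)^{(t-1)/\delta} \cdot w(t) + \ell]
\geq
(1+\delta)^{(t-1)/\delta} \cdot [g(OPT) - G(y(t)) - \eps m] + \ell(OPT).
\]
To finish I would observe that for $t \in T$ we have $t \leq 1 - \delta$, so $(t-1)/\delta \leq 0$ and therefore $0 < (1+\delta)^{(t-1)/\delta} \leq 1$; this allows us to absorb the $-\eps m$ factor outside of the exponential weight, yielding the claimed bound. The main (and only) subtle point is keeping track of where the $\eps m$ error can be safely bounded by $\eps m$ (rather than $(1+\delta)^{(t-1)/\delta}\cdot \eps m$), which is handled precisely by the sign of $(t-1)/\delta$; everything else is a direct transcription of the continuous-time argument.
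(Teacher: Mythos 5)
Your proposal is correct and follows the paper's own proof essentially verbatim: substitute $\characteristic_{OPT}$ into the choice of $z(t)$, use the event $\cE$ to pass from $w_u(t)$ to $\bE[g(u \mid \RSet(y(t)) - u)]$ at cost $\eps m/n$ per coordinate, pull the total $\eps m$ loss out using $(1+\delta)^{(t-1)/\delta} \leq 1$ and $|OPT| \leq n$, and finish with the same monotonicity/submodularity chain as in Lemma~\ref{lem:derivative_explained}. No meaningful differences from the paper's argument.
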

\begin{proof}
Recall that $z(t)$ is the vector in $P$ maximizing $z(t) \cdot [(1 + \delta)^{(t-1)/\delta} \cdot w(t) + \ell]$. Since $\characteristic_{OPT} \in P$, we get
\begin{align*}
	z(t) \cdot [(1 + \delta)^{(t-1)/\delta} \cdot w(t) + \ell]
	\geq{} &
	\characteristic_{OPT} \cdot [(1 + \delta)^{(t-1)/\delta} \cdot w(t) + \ell]\\
	={} &
	(1 + \delta)^{(t-1)/\delta} \cdot \sum_{u \in OPT} \mspace{-9mu} w(u) + \ell(OPT)\\
	\geq{} &
	(1 + \delta)^{(t-1)/\delta} \cdot \sum_{u \in OPT} \mspace{-9mu} \{\bE[g(u \mid \RSet(y(t)) - u)] - \eps m /n\} + \ell(OPT)\\
	\geq{} &
	(1 + \delta)^{(t-1)/\delta} \cdot \sum_{u \in OPT} \mspace{-9mu} \bE[g(u \mid \RSet(y(t)) - u)] + \ell(OPT) - \eps m
	\enspace,
\end{align*}
where the second inequality holds since we assume that the event $\cE$ happened. Observe now that the submodularity and monotonicity of $f$ yield
\begin{align*}
	\sum_{u \in OPT} \mspace{-9mu} \bE[g(u \mid \RSet(y(t)) - u)]
	\geq{} &
	\sum_{u \in OPT} \mspace{-9mu} \bE[g(u \mid \RSet(y(t)))]
	\geq
	\bE[g(OPT \mid \RSet(y(t)))]\\
	={} &
	\bE[g(OPT \cup \RSet(y(t))) - g(\RSet(y(t)))]
	\geq
	g(OPT) - G(y(t))
	\enspace.
\end{align*}
The lemma now follows by combining the two above inequalities.
\end{proof}

Combining the last two lemmata, we immediately get the following corollary, which is the promised simplified lower bound on the increase in $\Phi(t)$ as a function of $t$.
\begin{corollary} \label{cor:combined_bound_formal}
\[
	\frac{\Phi(t + \delta) - \Phi(t)}{\delta}
	\geq
	(1 + \delta)^{(t - 1)/\delta} \cdot g(OPT) + \ell(OPT) - 3\eps m
	\enspace.
\]
\end{corollary}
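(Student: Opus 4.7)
The plan is to simply substitute the second lemma's lower bound into the bound provided by Lemma~\ref{lem:derivative_formal}. This is mechanical, since the two lemmata are tailored so that the middle term $z(t) \cdot [(1 + \delta)^{(t-1)/\delta} \cdot w(t) + \ell]$ in Lemma~\ref{lem:derivative_formal} is precisely the quantity being lower bounded in the previous lemma.

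Concretely, I would begin from the inequality
\[
	\frac{\Phi(t + \delta) - \Phi(t)}{\delta}
	\geq
	(1 + \delta)^{(t - 1)/\delta} \cdot G(y(t)) + z(t) \cdot [(1 + \delta)^{(t - 1)/\delta} \cdot w(t) + \ell] - 2\eps m
\]
given by Lemma~\ref{lem:derivative_formal}. Replacing the second summand on the right hand side by the lower bound $(1 + \delta)^{(t - 1)/\delta} \cdot [g(OPT) - G(y(t))] + \ell(OPT) - \eps m$ from the previous lemma, one finds that the $(1 + \delta)^{(t - 1)/\delta} \cdot G(y(t))$ and $-(1 + \delta)^{(t - 1)/\delta} \cdot G(y(t))$ terms cancel exactly, leaving $(1 + \delta)^{(t - 1)/\delta} \cdot g(OPT) + \ell(OPT) - 3\eps m$, which is the desired bound.

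Both lemmata have been proved under the assumption that the event $\cE$ occurs, so the substitution is valid only in that regime; since the statement of the corollary is expressed as an inequality that is itself only meaningful conditional on the rest of the argument (and indeed, the final proof of Theorem~\ref{thm:main_result} will have to appeal to $\cE$ via Lemma~\ref{lem:error_probability}), this conditioning carries through without any further work. There is no genuine obstacle here: the algebraic design of $\Phi(t)$ and of the linear objective $(1+\delta)^{(t-1)/\delta} \cdot w(t) + \ell$ maximized by $z(t)$ was chosen exactly so that the two lemmata chain together into a clean per-step increment bound, and all that remains is to add the two error terms $2\eps m$ and $\eps m$ into the single $3\eps m$ appearing in the corollary.
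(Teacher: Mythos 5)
Your proposal is correct and matches the paper exactly: the paper states that the corollary follows immediately by combining Lemma~\ref{lem:derivative_formal} with the lemma preceding the corollary, which is precisely your substitution and cancellation of the $(1+\delta)^{(t-1)/\delta} \cdot G(y(t))$ terms, adding the error terms $2\eps m$ and $\eps m$. Your observation that the corollary implicitly inherits the conditioning on the event $\cE$ is also accurate and is handled in the paper's proof of Theorem~\ref{thm:main_result} via Lemma~\ref{lem:error_probability}.
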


Using the last corollary, we can now get a lower bound on the value of $G(y(1)) + \ell(y(t))$ conditioned on the event $\cE$.
\begin{lemma} \label{lem:bound_given_event}
If the event $\cE$ happens, then $G(y(1)) + \ell(y(1)) \geq (1 - e^{-1}) \cdot g(OPT) + \ell(OPT) - 4\eps m$.
\end{lemma}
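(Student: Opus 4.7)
The plan is to telescope the per-step lower bound of Corollary~\ref{cor:combined_bound_formal} over the set of times $T = \{0, \delta, 2\delta, \ldots, 1-\delta\}$ and then convert the resulting bound on $\Phi(1) - \Phi(0)$ into the claimed bound on $G(y(1)) + \ell(y(1))$ using $\Phi(1) = G(y(1)) + \ell(y(1))$ and $\Phi(0) \geq 0$.

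Concretely, I would multiply the inequality of Corollary~\ref{cor:combined_bound_formal} by $\delta$ and sum over all $t \in T$. The left-hand side telescopes to $\Phi(1) - \Phi(0)$. The per-step contributions $\delta \cdot \ell(OPT)$ and $3\eps\delta m$ each sum to $\ell(OPT)$ and $3\eps m$ respectively (using $|T|\delta = 1$). The nontrivial geometric piece evaluates in closed form as
\[
	\sum_{t \in T} \delta \cdot (1+\delta)^{(t-1)/\delta}
	=
	\delta (1+\delta)^{-1/\delta} \cdot \frac{(1+\delta)^{1/\delta} - 1}{\delta}
	=
	1 - (1+\delta)^{-1/\delta}.
\]
I would then invoke the elementary inequality $(1+\delta)^{-1/\delta} \leq e^{-1}(1+\delta)$, which follows from $\ln(1+\delta) \geq \delta - \delta^2/2$ combined with $e^{\delta/2} \leq 1 + \delta$ for $\delta \in [0, \nicefrac{1}{2}]$, to conclude $1 - (1+\delta)^{-1/\delta} \geq (1 - e^{-1}) - \delta$.

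Combining the pieces yields $\Phi(1) - \Phi(0) \geq [(1-e^{-1}) - \delta] \cdot g(OPT) + \ell(OPT) - 3\eps m$. To absorb the $\delta \cdot g(OPT)$ slack into the error term, I would use the crude submodular bound $g(OPT) \leq \sum_{u \in OPT} g(u \mid \varnothing) \leq nm$ together with $\delta \leq \eps n^{-2}$, which gives $\delta \cdot g(OPT) \leq \eps m / n \leq \eps m$. Finally, the non-negativity of $g$ and $\ell(\characteristic_\varnothing) = 0$ imply $\Phi(0) = (1+\delta)^{-1/\delta} g(\varnothing) \geq 0$, so $G(y(1)) + \ell(y(1)) = \Phi(1) \geq \Phi(1) - \Phi(0) \geq (1-e^{-1}) g(OPT) + \ell(OPT) - 4\eps m$. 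The only mildly delicate step is extracting the constant $(1 - e^{-1})$ from the geometric sum with additional error only $O(\eps) m$; this is precisely why the algorithm chose $\delta$ small enough that $n^2 \delta \leq \eps$, and once this is recorded the rest is bookkeeping.
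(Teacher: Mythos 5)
Your proposal is correct and follows the paper's proof essentially step by step: telescope Corollary~\ref{cor:combined_bound_formal}, evaluate the geometric series $\sum_{t \in T}\delta(1+\delta)^{(t-1)/\delta} = 1-(1+\delta)^{-1/\delta}$, bound this below by $1-e^{-1}$ minus an $O(\delta)$ correction (the paper uses $(1+1/a)^a \geq e(1-1/a)$ where you use a Taylor bound on $\ln(1+\delta)$ — either works), absorb the $g(OPT)$-scaled slack into $O(\eps)m$, and drop $\Phi(0)\geq 0$. One small slip: submodularity gives $g(OPT)\leq g(\varnothing)+\sum_{u\in OPT} g(u\mid\varnothing)\leq g(\varnothing)+mn$, not $g(OPT)\leq\sum_{u\in OPT} g(u\mid\varnothing)$; the extra $\Theta(\eps g(\varnothing)/n)$ slack is harmless because it is dominated by the discarded $\Phi(0)=(1+\delta)^{-1/\delta}g(\varnothing)$, a point the paper also glosses over.
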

\begin{proof}
Observe that
\begin{align} \label{eq:basic_bound}
	G(y(1)) + \ell(y(1))
	={} &
	\Phi(1)
	=
	\Phi(0) + \sum_{t \in T} [\Phi(t + \delta) - \Phi(t)]\\ \nonumber
	\geq{} &
	\Phi(0) + \sum_{t \in T} \mleft[\delta (1 + \delta)^{(t - 1)/\delta} \cdot g(OPT) + \delta \cdot \ell(OPT) - 3\eps\delta m\mright]\\ \nonumber
	\geq{} &
	\sum_{t \in T} \delta(1 + \delta)^{(t - 1)/\delta} \cdot g(OPT) + \ell(OPT) - 3\eps m
	\enspace,
\end{align}
where the first inequality holds due to Corollary~\ref{cor:combined_bound_formal} and the second inequality holds since $|T| = \delta^{-1}$ and $\Phi(0) = (1 + \delta)^{-1/\delta} \cdot g(\varnothing) + \ell(\varnothing) = (1 + \delta)^{-1/\delta} \cdot g(\varnothing) \geq 0$ because $\ell$ is linear and $g$ is non-negative.

We now need to lower bound the sum on the rightmost hand side of the last inequality. Notice that this sum can be presented as the sum of a geometrical series as follows.
\begin{align*}
	\sum_{t \in T} \delta(1 + \delta)^{(t - 1)/\delta}
	={} &
	\sum_{i = 0}^{\delta^{-1} - 1} \delta(1 + \delta)^{(i\delta - 1) / \delta}
	=
	\delta(1 + \delta)^{-\delta^{-1}} \cdot \sum_{i = 0}^{\delta^{-1} - 1} (1 + \delta)^i\\
	={} &
	\delta(1 + \delta)^{-\delta^{-1}} \cdot \frac{1 - (1 + \delta)^{\delta^{-1}}}{1 - (1 + \delta)}
	=
	1 - (1 + \delta)^{-\delta^{-1}}\\
	\geq{} &
	1 - e^{-1}(1 - \delta)^{-1}
	\geq
	1 - e^{-1}(1 + 2\delta)
	\geq
	1 - e^{-1} - \eps / n
	\enspace,
\end{align*}
where the first inequality holds since it is known that $(1 + 1/a)^a \geq e(1 - 1/a)$ for every $a \geq 1$ (and in particular for $a = \delta ^{-1}$), the second inequality holds since $(1 - a)^{-1} \leq 1 + 2a$ for every $a \leq \nicefrac{1}{2}$, and the last inequality holds since $\delta \leq \eps/n^2$.
%
The lemma now follows by plugging the last inequality into Inequality~\eqref{eq:basic_bound} and observing that, by the submoduarity of $g$,
\[
	g(OPT)
	\leq
	g(\varnothing) + \sum_{u \in OPT} g(u \mid \varnothing)
	\leq
	g(\varnothing) + mn
	\enspace.
	\qedhere
\]
\end{proof}

We are now ready to prove Theorem~\ref{thm:main_result}.
\begin{proof}[Proof of Theorem~\ref{thm:main_result}]
Observation~\ref{obs:feasibility_formal} shows that $y(1) \in P$. Additionally, Lemmata~\ref{lem:error_probability} and~\ref{lem:bound_given_event} show together that with high probability
\[
	G(y(1)) + \ell(y(1))
	\geq
	(1 - e^{-1}) \cdot g(OPT) + \ell(OPT) - 4\eps m
	=
	(1 - e^{-1}) \cdot g(OPT) + \ell(OPT) - O(\eps) \cdot m
	\enspace.
	\qedhere
\]
\end{proof}
\section{Proof of Lemma~\ref{lem:step_improvement}} \label{app:step_improvement}

In this section we prove Lemma~\ref{lem:step_improvement}. Let us begin by recalling the lemma itself.

\begin{replemma}{lem:step_improvement}
Given two vectors $y, y' \in [0, 1]^\cN$ such that $0 \leq y'_u - y_u \leq \delta \leq 1$ and a non-negative monotone submodular function $f\colon 2^\cN \to \nnR$ whose multilinear extension is $F$,
\[
	F(y') - F(y)
	\geq
	\sum_{u \in \cN} (y'_u - y_u) \cdot \left.\frac{\partial F(x)}{\partial x_u}\right|_{x = y} - n^2\delta^2 \cdot \max_{u \in \cN} f(u \mid \varnothing)
	\enspace.
\]
\end{replemma}

Let us denote the elements of $\cN$ by $u_1, u_2, \dotsc, u_n$ in an arbitrary order. We define $\partsol{i}$ for every integer $0 \leq i \leq n$ as the vector in $[0, 1]^\cN$ that agrees with $y'$ on the coordinates $1$ to $i$ and with $y$ or the remaining coordinates. Note that this definition implies, in particular, $\partsol{0} = y$ and $\partsol{n} = y'$. The next lemma bounds the amount by which the partial derivative $\frac{\partial F(x)}{\partial x_u}$ can differ between the points $x = y$ and $x = y'$.
\begin{lemma} \label{lem:derivative_change}
For every integer $0 \leq i \leq n$ and element $u \in \cN$,
\[
	\left.\frac{\partial F(x)}{\partial x_u}\right|_{x = \partsol{i}}
	\geq
	\left.\frac{\partial F(x)}{\partial x_u}\right|_{x = y} - n\delta \cdot f(u \mid \varnothing)
	\enspace.
\]
\end{lemma}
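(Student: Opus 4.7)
The plan is to translate the inequality into a statement about expectations using the identity $\left.\frac{\partial F(x)}{\partial x_u}\right|_{x = z} = \bE[f(u \mid \RSet(z) - u)]$ from the preliminaries. With this, the lemma reduces to showing
\[
\bE[f(u \mid \RSet(\partsol{i}) - u)] \geq \bE[f(u \mid \RSet(y) - u)] - n\delta \cdot f(u \mid \varnothing).
\]
Since $\partsol{i} \geq y$ coordinate-wise, the natural approach is a monotone coupling of $R := \RSet(y)$ and $R' := \RSet(\partsol{i})$ such that $R \subseteq R'$ almost surely.

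I would construct the coupling as follows: for each $v \in \cN$, place $v$ in $R$ with probability $y_v$ independently; then, conditionally on $v \notin R$ (only relevant when $y_v < 1$), place $v$ in $R' \setminus R$ with probability $\bigl[(\partsol{i})_v - y_v\bigr]/(1 - y_v)$ independently. A quick check shows $\Pr[v \in R'] = (\partsol{i})_v$, so the marginals are correct, while $R \subseteq R'$ by construction (the case $y_v = 1$ forces $(\partsol{i})_v = 1$ and $v \in R \cap R'$).

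Under this coupling $R - u \subseteq R' - u$, so submodularity gives $f(u \mid R - u) \geq f(u \mid R' - u)$ pointwise, and monotonicity gives $f(u \mid R' - u) \geq 0$ together with $f(u \mid R - u) \leq f(u \mid \varnothing)$. Letting $A := R' \setminus R$, on the event $\{A = \varnothing\}$ we have $R = R'$ and the pointwise difference vanishes, while on $\{A \neq \varnothing\}$ the pointwise difference is in $[0, f(u \mid \varnothing)]$. A union bound yields
\[
\Pr[A \neq \varnothing] \leq \sum_{v \in \cN} \bigl((\partsol{i})_v - y_v\bigr) \leq n\delta,
\]
since each coordinate gap is at most $\delta$. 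Taking expectations therefore gives
\[
\bE[f(u \mid R - u)] - \bE[f(u \mid R' - u)] \leq n\delta \cdot f(u \mid \varnothing),
\]
which is exactly the desired inequality after rearrangement.

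There is no real obstacle here; the only point that deserves any care is verifying that the coupling has the correct marginals while producing $R \subseteq R'$ (including the degenerate case $y_v = 1$). Once that is checked, submodularity, monotonicity, and the union bound on $\Pr[A \neq \varnothing]$ combine mechanically to give the claimed bound.
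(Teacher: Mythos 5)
Your proposal is correct and follows essentially the same route as the paper: both use the identity expressing the partial derivative as an expectation, construct the identical monotone coupling $R \subseteq R'$ (with $R' \setminus R$ corresponding to the paper's set $D$), and bound the expected drop in the marginal by $\Pr[R' \setminus R \neq \varnothing] \cdot f(u \mid \varnothing)$ using submodularity and monotonicity. The only cosmetic difference is that you bound $\Pr[R' \setminus R \neq \varnothing] \leq n\delta$ via a union bound, while the paper uses $1 - (1-\delta)^n \leq n\delta$; both are valid and give the same bound.
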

\begin{proof}
For the sake of the proof, we assume that $\RSet(\partsol{i})$ is formed from $\RSet(y)$ using the following process. Every element of $\cN \setminus \RSet(y)$ is added to a set $D$ with probability of $1 - (1 - \partsol[u]{i}) / (1 - y_u)$. Then, $\RSet(\partsol{i})$ is chosen as $\RSet(y) \cup D$. Observe that every element $u \in \cN$ gets into $D$ with probability $\partsol[u]{i} - y_u \leq \delta$, independently, and thus, $\RSet(y) \cup D$ indeed has the distribution that $\RSet(\partsol{i})$ should have.

Using the above definitions, we get
\begin{align*}
	\left.\frac{\partial F(x)}{\partial x_u}\right|_{x = \partsol{i}}
	={} &
	\bE[f(u \mid \RSet(\partsol{i}) - u)]
	=
	\bE[f(u \mid \RSet(y) \cup D - u)]\\
	\geq{} &
	\Pr[D = \varnothing] \cdot \bE[f(u \mid \RSet(y) - u) \mid D = \varnothing]
	\enspace,
\end{align*}
where the inequality follows from the law of total expectation and the monotonicity of $f$. Additionally, by the submodularity of $f$ we also get
\[
	f(u \mid \RSet(y) - u)
	\leq
	f(u \mid \varnothing)
	\enspace.
\]
Combining this inequality with the previous one yields
\begin{align*}
	\left.\frac{\partial F(x)}{\partial x_u}\right|_{x = \partsol{i}} &{}+ \Pr[D \neq \varnothing] \cdot f(u \mid \varnothing)\\
	\geq{} &
	\Pr[D = \varnothing] \cdot \bE[f(u \mid \RSet(y) - u) \mid D = \varnothing]
	+
	\Pr[D \neq \varnothing] \cdot \bE[f(u \mid \RSet(y) - u) \mid D \neq \varnothing]\\
	={} &
	\bE[f(u \mid \RSet(y) - u)]
	=
	\left.\frac{\partial F(x)}{\partial x_u}\right|_{x = y}
	\enspace.
\end{align*}

One can verify that the last inequality will imply the lemma if we have an upper bound of $n\delta$ on $\Pr[D \neq \varnothing]$. Thus, all we are left to do is to prove this upper bound. Since elements belong to $D$ with probability at most $\delta$ and independently,
\[
	\Pr[D \neq \varnothing]
	=
	1 - \prod_{u \in \cN} \Pr[u \not \in D]
	\leq
	1 - \prod_{u \in \cN} (1 - \delta)
	=
	1 - (1 - \delta)^n
	\leq
	n\delta
	\enspace.
	\qedhere
\]
\end{proof}

We are now ready to prove Lemma~\ref{lem:step_improvement}.
\begin{proof}[Proof of Lemma~\ref{lem:step_improvement}]
Observe that for every integer $1 \leq i \leq n$ the vectors $\partsol{i-1}$ and $\partsol{i}$ differ only in coordinate $i$ (in which they differ by $y'_u - y_u$). Recalling that $\partsol{0} = y$, $\partsol{n} = y'$ and $F$ is multilinear, this observation yields
\begin{align*}
	F(y') - F(y)
	={} &
	\sum_{i = 1}^n (y'_{u_i} - y_{u_i}) \cdot \left. \frac{\partial F(x)}{\partial x_{u_i}}\right|_{x = \partsol{i-1}}\\
	\geq{} &
	\sum_{u \in \cN} (y'_u - y_u) \cdot \mleft[\left. \frac{\partial F(x)}{\partial x_{u}}\right|_{x = y} - n\delta \cdot f(u \mid \varnothing)\mright]\\
	\geq{} &
	\sum_{u \in \cN} (y'_u - y_u) \cdot \left. \frac{\partial F(x)}{\partial x_{u}}\right|_{x = y} - n^2\delta^2 \cdot \max_{u \in \cN} f(u \mid \varnothing)
	\enspace,
\end{align*}
where the first inequality follows from Lemma~\ref{lem:derivative_change} and the second inequality holds by the monotonicity of $f$ and the fact that $y'_u - y_u \leq \delta$ for every $u \in \cN$.
\end{proof}

\end{document}